\newtheorem{theorem}{Theorem}
\newtheorem{corollary}{Corollary}[theorem]
\newtheorem{lemma}[theorem]{Lemma}
\newcommand{\cmark}{\ding{51}}%
\newcommand{\xmark}{\ding{55}}%
\crefname{section}{Sec.}{Secs.}
\Crefname{section}{Section}{Sections}
\Crefname{table}{Table}{Tables}
\crefname{table}{Tab.}{Tabs.}
\begin{document}

\title{PILC: Practical Image Lossless Compression with \\
an End-to-end GPU Oriented Neural Framework }

\author{Ning Kang $^{1}$\thanks{ Equal contribution. Work done when Shanzhao Qiu interns at Huawei.} \quad Shanzhao Qiu $^{2}$\footnotemark[1]  \quad Shifeng Zhang $^{1}$ \quad Zhenguo Li $^{1}$\thanks{ Correspondence to: Zhenguo Li (li.zhenguo@huawei.com) and Shutao Xia (xiast@sz.tsinghua.edu.cn).} \quad Shutao Xia $^{2}$\footnotemark[2]\\
$^{1}$ Huawei Noah’s Ark Lab \quad $^{2}$ Tsinghua University \\
{\tt\small kang.ning2@huawei.com, qiusz20@mails.tsinghua.edu.cn}}


\maketitle

\begin{abstract}
Generative model based image lossless compression algorithms have seen a great success in improving compression ratio.
However, the throughput for most of them is less than 1 MB/s even with the most advanced AI accelerated chips, preventing them from most real-world applications, which often require 100 MB/s.
In this paper, we propose PILC, an end-to-end image lossless compression framework that achieves 200 MB/s for both compression and decompression with a single NVIDIA Tesla V100 GPU, $10\times$ faster than the most efficient one before. 
To obtain this result, we first develop an AI codec that combines auto-regressive model and VQ-VAE which performs well in lightweight setting, then we design a low complexity entropy coder that works well with our codec.
Experiments show that our framework compresses better than PNG by a margin of 30\% in multiple datasets.
We believe this is an important step to bring AI compression forward to commercial use.
\end{abstract}

\section{Introduction}
\label{sec:intro}


Lossy compression have shown great success in recent research\cite{DBLP:conf/iccv/ChoiEL19, DBLP:journals/corr/abs-1908-04187,DBLP:journals/corr/abs-2002-04988,DBLP:conf/icml/BlauM19,DBLP:conf/icml/RippelB17,DBLP:journals/corr/abs-1907-08310,DBLP:conf/cvpr/MentzerATTG18,conf/wacv/PatelAM21}. In this paper, we focus on lossless compression. 
The basic idea of a lossless compression algorithm is to represent more likely appeared data with shorter codewords, while less frequent data with longer ones, such that the codeword is shorter than the original data in expectation. 
For example, an image with each pixel randomly generated is rarely seen in the real world, while a real picture taken by a camera occurs much more often, so the latter compresses better for almost all image compression algorithms. 

\begin{figure}
    \centering
    \includegraphics[width=0.466\textwidth]{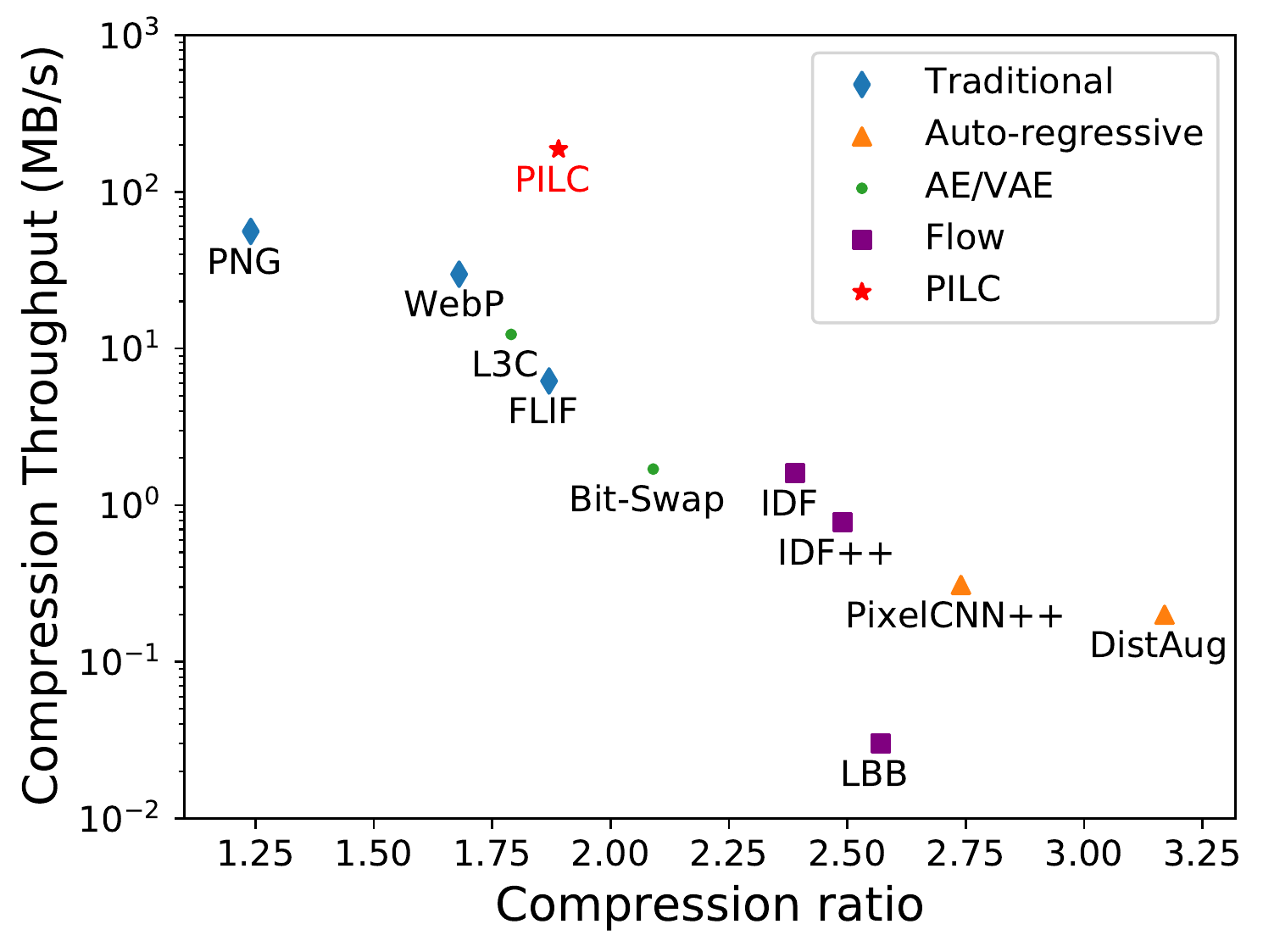}
    \caption{Comparison of compression throughput and ratio on CIFAR10. Our framework achieves high compression throughput with a competitive compression ratio.}
    \label{fig:enc_speed}
    \vspace{-4mm}
\end{figure}

According to Shannon's source coding theorem\cite{DBLP:journals/bstj/Shannon48}, no algorithm can compress data shorter than its entropy.
On the other hand, if the distribution of data is known in advance, an entropy coder can be applied so that the code length would be very close to its entropy, thus obtaining the optimal compression ratio.

Unfortunately, in most cases, the data distribution is unknown. 
Traditional algorithms make use of some prior information to infer this distribution. 
For example, LZ77~\cite{journals/tit/ZivL77} and LZ78~\cite{journals/tit/ZivL78} used by most archive formats assume that data with lots of repeated segments occur more often than those without. For image compression algorithms, PNG~\cite{journals/rfc/rfc2083} assumes most adjacent pixels are similar to each other, and JPEG-like algorithms apply the fact that lower frequency components are more significant than higher frequency ones.

Recently, deep generative models have shown great success in probabilistic modelling and lossless compression.
However, the low throughput makes them difficult to be used in real scenarios. 
An end user does not want to wait for too long to compress/decompress a file, and a file server needs to manipulate tons of files every day. In many cases, a 100+ MB/s throughput is needed. For example, streaming a 1080p video with 30 fps requires 187 MB/s for decompression. However, it is difficult for a learned algorithm to achieve this speed due to the following bottlenecks:

\textbf{Network inference.} Barely any AI compression algorithms have inference speed of 100+ MB/s because:
\begin{compactitem}
\item A large network is usually needed for a good density estimation~\cite{conf/iclr/Child21}.
\item Multiple network inferences are required to decompress with an auto-regressive~\cite{conf/icml/OordKK16, conf/iclr/SalimansK0K17} model.
\item Bits-Back scheme~\cite{journals/cj/FreyH97} prevents algorithms from applying a large batch size~\cite{conf/iclr/TownsendBB19}.
\end{compactitem}


\textbf{Coder.} Generative models only cooperate well with a dynamic entropy coder, which is far slower than static variants used in traditional algorithms. For most AI algorithms, the coder decodes only about 1 MB/s single-threaded.

\textbf{Data transfer.} Data transfer between CPU memory and GPU memory can also be a bottleneck for AI algorithms. 
\begin{compactitem}
\item Too many transfers are needed for decompression with an auto-regressive or hierarchical AE/VAE model. 
\item Transfer amount is too large for models predicting with a complicated distribution including lots of parameters, such as a distribution mixed by 10 logistic ones.
\end{compactitem}

Besides, other problems exist in current AI algorithms. One is single image decompression. One would expect to decompress only this image, instead of having to decompress many unrelated ones at the same time, but it is not the case for some VAE~\cite{conf/iclr/TownsendBB19} and Flow algorithms~\cite{conf/nips/HoLA19} using bits-back~\cite{journals/cj/FreyH97}. Another is compression with different image sizes. An algorithm that can only compress images with a fixed size hardly gets used in real applications, however, those with a fully connected layer or transformer layer in the network are not straight-forward to change the input size. \Cref{table:alg_limits} briefly shows the limitation of current algorithms.



\subsection{Our contributions}
In this paper, we focus on practical image lossless compression. To solve above issues, we make the following contributions:
\begin{compactitem}
\item We build an end-to-end framework with compression/decompression throughput of about 200 MB/s in one Tesla V100, and compression ratio 30\% better than PNG, which also enables single image decompression and image of different sizes.
\item We develop a very lite auto-regressive + Vector Quantized Variational Auto-encoder (VQ-VAE)~\cite{conf/nips/van17} model, whose inference speed is about 300 MB/s, but log-likelihood is similar to L3C~\cite{conf/cvpr/MentzerATTG19}, whose inference speed is only about 30 MB/s.
\item We design an AI compatible semi-dynamic entropy coder that is efficient in GPU besides CPU. As a result, only one data transfer with a minimal amount is needed between CPU memory and GPU memory. As far as we know, this is the first GPU coder applied in an AI compression algorithm.
\item We implement the end-to-end version in pure Python and PyTorch, making it easier for future works to test and improve on our implementation.
\end{compactitem}

\section{Related Work}

\begin{table}
\small
  \centering
    \caption{Limitations for some generative model based algorithms, where the columns mean: whether single image decompression is supported; whether images of different sizes are supported; whether inference time, coder, transfer time are faster than 100 MB/s, respectively.}
  \begin{tabular}{@{}lccccc@{}}
    \toprule
    Algorithm & Single & Size & Inference & Code & Trans \\
    \midrule
    PixelCNN~\cite{conf/icml/OordKK16} & \cmark & \cmark & \xmark & \xmark & \xmark \\
    DistAug~\cite{conf/icml/JunCCSRRS20} & \cmark & \xmark & \xmark & \xmark & \xmark \\
    L3C~\cite{conf/cvpr/MentzerATTG19} & \cmark & \cmark & \xmark & \xmark & \cmark \\
    BB-ANS~\cite{conf/iclr/TownsendBB19} & \xmark & \xmark & \xmark & \xmark & \xmark \\
    Bit-Swap~\cite{conf/icml/KingmaAH19} & \xmark & \xmark & \xmark & \xmark & \xmark \\
    IDF~\cite{conf/nips/HoogeboomPBW19} & \cmark & \cmark & \xmark & \xmark & \cmark \\
    LBB~\cite{conf/nips/HoLA19} & \xmark & \cmark & \xmark & \xmark & \xmark \\
    \textbf{PILC (Ours)} & \cmark & \cmark & \cmark & \cmark & \cmark \\
    \bottomrule
  \end{tabular}
  \label{table:alg_limits}
  \vspace{-4mm}
\end{table}


\textbf{Generative model based lossless compression.}
\label{subsec:AI lossless compression}
While human designed algorithms are sometimes very powerful, deep-learning models can usually do better (See \cref{fig:enc_speed}).
For example, to compress a portrait, many traditional algorithms can make use of the fact that the skin color is similar everywhere, but few know a human has two eyes, one nose, etc., which can be usually be captured by a generative model.
Denote the real probability of $x$ as $p(x)$, the predicted probability of $x$ as $q(x)$, and the expected codeword length as $L$. Then
\begin{equation}
    L(X) = -\sum_{x \in X} p(x) \log_2 q(x) = H(X) + KL(P || Q).\label{eq:actual_length}
\end{equation}
Therefore, for a deep-learning algorithm, the compression ratio depends on how accurate the prediction is. 
Generative model based image lossless compression algorithms can be classified into 3 types. 
\begin{compactitem}
\item Auto-regressive (AR) algorithms predict each symbol from the previous context.
Some of them are PixelCNN~\cite{conf/icml/OordKK16}, PixelCNN++~\cite{conf/iclr/SalimansK0K17}, DistAug~\cite{conf/icml/JunCCSRRS20}, etc..
They have great compression ratios with accurate modelling. 
However during decompression, since one pixel can be predicted only when all previous ones are known, one inference of the network can only decode one symbol, resulting in a very slow decompression speed.


\item Autoencoder (AE) and Variational autoencoder (VAE) algorithms make use of latent variables to help with the prediction. 
Since the whole image can be predicted with only one network inference, an auto-encoder algorithm such as L3C~\cite{conf/cvpr/MentzerATTG19} can decompress thousands of times faster than PixelCNN++~\cite{conf/iclr/SalimansK0K17}, but with a much worse compression ratio.
After bits-back~\cite{journals/cj/FreyH97} scheme is introduced to VAE algorithm~\cite{conf/iclr/TownsendBB19}, the compression ratio is improved significantly.
Thus later works such as BB-ANS~\cite{conf/iclr/TownsendBB19}, Bit-Swap~\cite{conf/icml/KingmaAH19} and HiLLoC~\cite{conf/iclr/TownsendBKB20} all apply this scheme.
Algorithms such as NVAE~\cite{conf/nips/VahdatK20} and VDVAE~\cite{conf/iclr/Child21} can even get ratios comparable to auto-regressive ones.
However, bits-back introduces a dependency between different images, making it impossible to decompress a single one. 
Furthermore, the cost of random initial bits can only be leveraged by compressing/decompressing a large number of images sequentially, affecting the batch size and transfer speed greatly.

\item Flow algorithms apply learnable invertible transforms to make data easier to compress. 
IDF~\cite{conf/nips/HoogeboomPBW19} and IDF++~\cite{conf/iclr/BergG0SS21} that have only integer transforms are named as integer flows. 
On the other hand, general flows such as LBB~\cite{conf/nips/HoLA19} and iFlow~\cite{journals/corr/abs-2111-00965} use transforms in real numbers.
There is also one type in between called volume preserving flow, which is discovered in iVPF~\cite{conf/cvpr/ZhangZ0L21}. Integer flows are the only type that do not require bits-back, hence the only one that supports single image decoding. But as the expressive power of a single layer is weaker, the model needs to be large, affecting the inference time. Besides, general flow models require one coding step for each layer, making data transfer another bottleneck.
\end{compactitem}


\begin{figure*}
  \centering
  \begin{subfigure}[t]{0.816\linewidth}
    \includegraphics[height=0.50\textwidth]{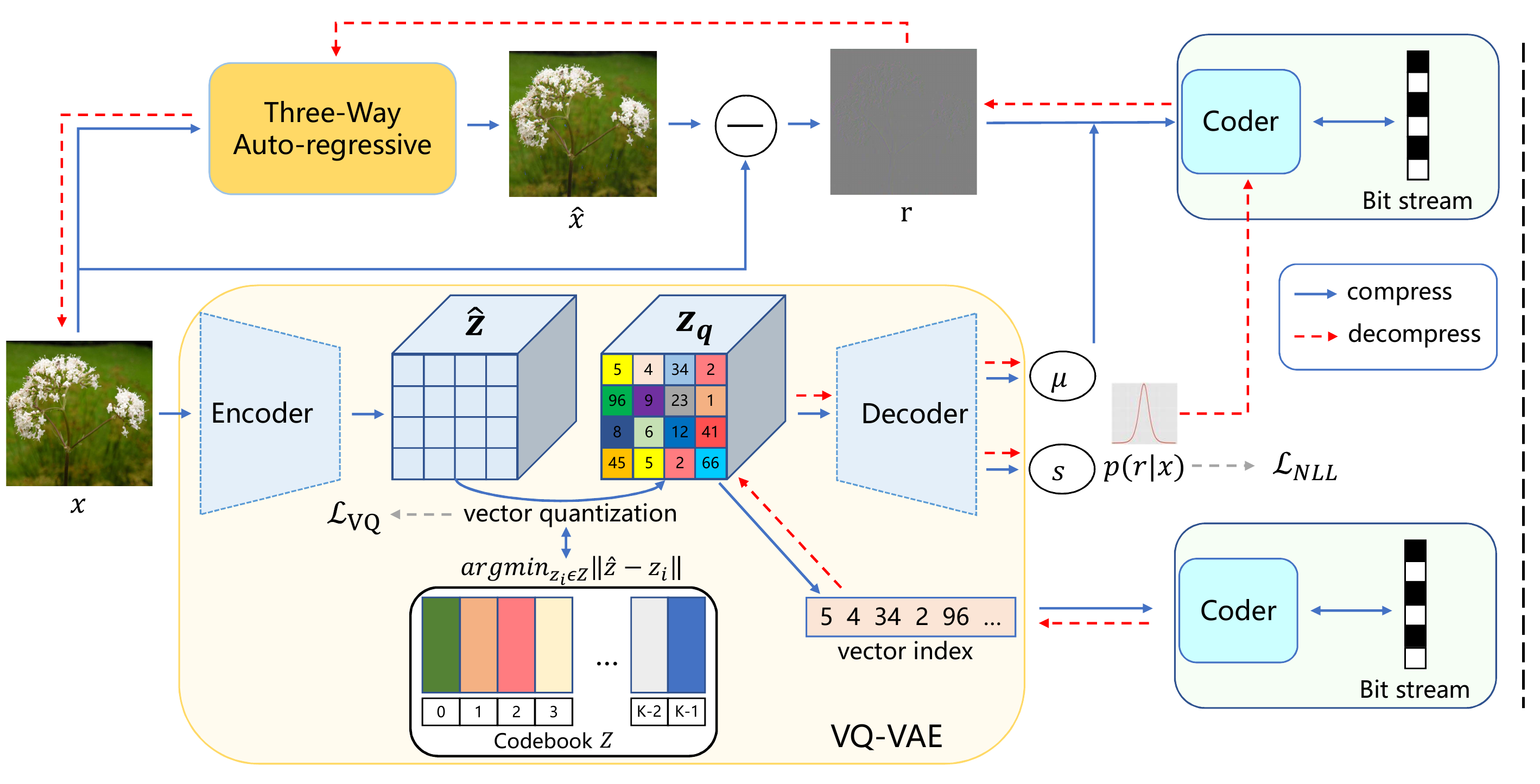}
    \caption{Overall framework of our proposed lossless algorithm}
    \label{fig:framework}
  \end{subfigure}
  \hfill
  \begin{subfigure}[t]{0.178\linewidth}
      \includegraphics[height=1.85\textwidth]{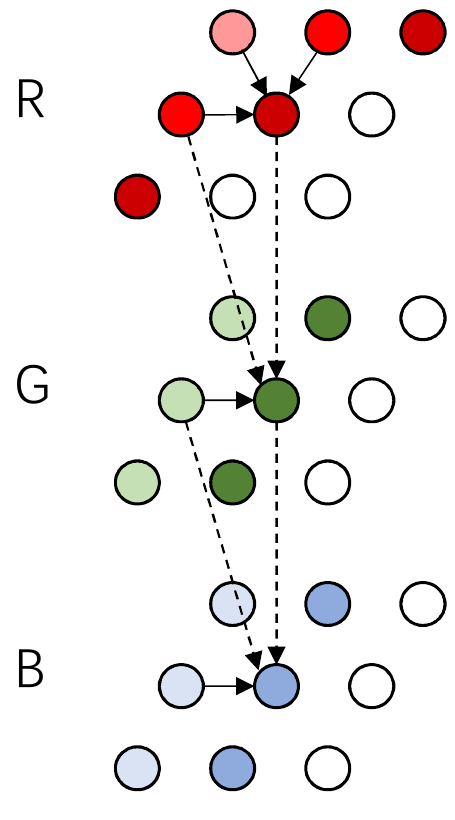}
    \caption{Three-Way Auto-regressive}
    \label{fig:three-way}
  \end{subfigure}
  \caption{Left: Our framework consists of Three-Way Auto-regressive, VQ-VAE, and Coder. The AR module predicts the input image and obtains the residual. The VQ-VAE models the probability distribution of the residual with a codebook. The Coder compresses image residual and vector index to bit stream. When decompress, vector index is decompressed first and pushed into the VQ-VAE decoder to obtain the residual distribution. This distribution is subsequently adopted to decompress the residual and then recover the original image (blue solid arrows indicate compress process, red dotted ones indicate decompress, best viewed in color). Right: Three-Way Auto-regressive adopts three predicted points to predict the current point.}
  \label{fig:overall}
  \vspace{-4 mm}
\end{figure*}

\textbf{Entropy Coder.}
A static entropy coder codes data from a fixed distribution. It is used by most commercial algorithms. 
A dynamic entropy coder, on the other hand, can code many distributions. It is usually used to code one kind of distribution with parameters as input, such as a normal distribution with variable means and variances.
The most widely used entropy coders such as Huffman Code~\cite{huffman}, Arithmetic Coder (AC)~\cite{arithmetic_ibm,arithmetic_stanford}, and Asymmetric Numerical System (ANS)~\cite{ans} all have both static and dynamic variants.
Dynamic entropy coders are much slower than static ones, as the density mass needs to be calculated on the fly, unlike the latter one, which can precalculate all of them beforehand.
For example, in a single-threaded setting, a static variant of ANS named FSE~\cite{fse} decodes faster than 300 MB/s, while another variant named rANS~\cite{ans} decodes at only about 1 MB/s with a distribution mixed by 10 logistic ones. 
Despite this fact, a dynamic coder is required for an AI algorithm to do fine-grained density estimation, which is also a key point that makes AI outperform traditional methods.


\section{Methods}
In this section, we introduce PILC, a practical image lossless compression framework. The proposed framework consists of a generative model and a semi-dynamic entropy coder. The model is composed of Three-Way Auto-regressive module and VQ-VAE module. We demonstrate how the model interact with the coder to achieve high throughput.

\subsection{Model Architecture}
\label{subsec:model_architecture}

\textbf{Overall Model.}
Among all current algorithms, AE have the highest throughput but the lowest compression ratio.
As suggested by many previous works, AE models cannot capture local features very well, while a local AR model is capable of doing so~\cite{conf/nips/SchirrmeisterZB20, journals/corr/abs-2109-02639}.
Therefore an intuitive idea would be to use a small AR to capture local features, and AE for more global ones. 
Since AR model hurts the parallelism and affect the throughput heavily, it must be very small.
In this work, a Three-Way Auto-regressive (TWAR) model is designed, where each pixel is predicted with only 3 other ones, reserving the ability to decompress in parallel. 
For the AE model, we replace it with VQ-VAE~\cite{conf/nips/van17}, which is proven to be successful in image generation~\cite{conf/nips/van17,vqvae2,vqgan,cogview}, while the inference time is still similar to AE.

As is shown in ~\cref{fig:framework}, TWAR predicts the original image and output image residual, then VQ-VAE estimates the distribution of the residual.
Distribution of values in the residual is much simpler than those of original images, making it easier to be predicted by VQ-VAE and coded with our entropy coder.



{\em Compress}: Given an image $x$, we use the Three-Way Auto-regressive module to predict the image, and get the reconstruction image $\hat{x}$, image residual $r = x - \hat{x}$. 
Then we adopt VQ-VAE to estimate the residual distribution $p(r|x)$, which we model as a simple logistic distribution. With $r$ and $p(r|x)$, we could compress $r$ to bit stream with our proposed dynamic entropy coder. We also compress the vector index to bit stream using the coder.

{\em Decompress}: At decompress time, we decompress vector index from bit stream. Then we select corresponding vectors from the codebook $Z$ to formulate the latent feature $z_q$. The feature is fed to the decoder and outputs the residual distribution $p(r|x)$. The coder subsequently decode the residual $r$ using $p(r|x)$. The AR module then decodes the original image $x$ from the residual $r$.

\textbf{Learning Residual with Three-Way Auto-regressive.} Inspired by the Paeth filter which is utilized in the traditional lossless compression method PNG\cite{journals/rfc/rfc2083} to transform the image to make it more efficiently compressible, we adopt a similar rule that utilizes three predicted points to predict the current one, which is shown in ~\cref{fig:three-way}. Our AR module is formulated by:
\begin{equation}
    \hat{x}_{ruv} = W_{r}^T \cdot (x_{r(u - 1)(v - 1)}, x_{r(u - 1)v}, x_{ru(v - 1)} ) + b_r, \label{eq:r-channel}
\end{equation}
\begin{equation}
    \hat{x}_{guv} = W_{g}^T \cdot(x_{gu(v - 1)}, x_{ru(v-1)}, x_{ruv} ) + b_g, \label{eq:g-channel}
\end{equation}
\begin{equation}
    \hat{x}_{buv} = W_{b}^T \cdot(x_{bu(v - 1)}, x_{gu(v-1)}, x_{guv} ) + b_b, \label{eq:b-channel}
\end{equation}
where $r$, $g$, $b$ denote image channel, $u$, $v$ denote the spatial location. We pad the input image with one row on the top and one column on the left to initiate the prediction.
For the red channel, the current point is predicted by its up, left, upper-left points like Paeth filter. Inspired by YCoCg-R color space which is shown successful in FLIF\cite{conf/icip/SneyersW16} format and utilizes different channels to calculate the color space, for green and blue channel, we also use their previous channel to conduct the prediction.
Different from \cite{effientAR}, our AR module models the pictures channel-wise, predict the mean only, and only has 12 parameters (with a bias in each channel), but it could still generate a high-quality prediction.
As shown in ~\cref{fig:framework}, $\hat{x}$ is quite close to $x$, which suggests the locality of natural images.

As explained in ~\cref{sec:intro}, auto-regressive models like PixelCNN~\cite{conf/icml/OordKK16} fail in decompression speed due to their dependencies nature because it is hard to parallelize the inference process. 
Fortunately, our AR module could be parallelized during inference.~\Cref{fig:ae_parallel} suggests one step in decoding. For the red channel, the current point is dependent on its up, left, upper-left points, so anti-diagonal points can be calculated in parallel. 
Once the red channel is recovered, for the green channel, we could calculate column by column to recover this channel, and the blue channel is the same. 
Thus, more points could be calculated at one time, which speeds up the inference process of the AR module. Because the padding rule is fixed, we could eventually losslessly recover the original image $x$ using the residual $r$. 

\textbf{Learning Distribution with VQ-VAE.} 
Vector quantization has been proved to be effective in lossy compression\cite{softtohard,learningcontent}.
Instead of designing a more sophisticated auto-encoder (AE) model, we utilize VQ-VAE\cite{conf/nips/van17} to learn residual distribution. Intuitively, given a fixed storage space, for VQ-VAE, we could just store the vector index and when decoding, with the codebook we could utilize more diversity features. But for vanilla AE, we only have the fixed size features. Experiments in \cref{sub:model_ablation} suggest VQ-VAE's superiority over vanilla AE. 
As the original image would contain much more spatial information, we choose to model the residual distribution given the original image, \ie, $p(r|x)$. 

As shown in \cref{fig:framework}, given an image $x \in \mathbb{R}^{H\times W\times 3}$, we get its latent feature by:
\begin{equation}
    \hat{z} = E(x) \in \mathbb{R}^{h\times w\times D},
\end{equation}
where $E$ is residual block based encoder, and $D$ is feature depth dimension. Then, with a codebook $Z = \{z_i\}_{i=1}^K \in \mathbb{R}^D $, we could conduct vector quantization on $\hat{z}$:
\begin{equation}
    z_q = q(\hat{z}) := \left(\mathop{\arg \min}\limits_{z_i \in Z}\parallel \hat{z}_{uv} - z_i \parallel \right) \in \mathbb{R}^{h\times w\times D},
    \label{eq:vector quantization}
\end{equation}
where $q$ is the element-wise quantization for each $\hat{z}_{uv} \in \mathbb{R}^D$ to its closest code $z_i$, and $u$, $v$ is the spatial location. Finally, we output the residual distribution $p(r|x)$ which we model as a logistic distribution and is formulated by:
\begin{equation}
    (\mu,  s) = D(z_q) \in \mathbb{R}^{H\times W\times 3 \times 2},
\end{equation}
where $\mu$ is the location parameter, $s$ is the scale parameter, and $D$ is residual block based decoder.

At compress time, the quantized vector index is encoded to bit stream by the coder. At decompress time, we could decode the index, select the corresponding vectors, and decode the distribution $p(r|x)$. In this way, we utilize the codebook $Z$ to memorize more information. 

Since there is no real gradient defined for \cref{eq:vector quantization}, the gradient back-propagation is achieved by straight-through estimator, which just copies the gradient from the decoder to encoder. The VQ loss is formulated by:
\begin{equation}
    \mathcal{L}_{VQ} = {\parallel \mathtt{sg}[\hat{z}] - z_q \parallel}^2 + \beta{\parallel \hat{z} - \mathtt{sg}[z_q] \parallel}^2,
    \label{eq: vq loss}
\end{equation}
where $\mathtt{sg}$ denotes the stop-gradient operator. The first term optimizes the codebook $Z$ and the second term is a commitment loss to make sure $\hat{z}$ commits to codebook and does not grow arbitrarily\cite{conf/nips/van17}. $\beta$ is a weight parameter to balance the two terms.

\begin{figure}
    \centering
    \includegraphics[width=0.40\textwidth]{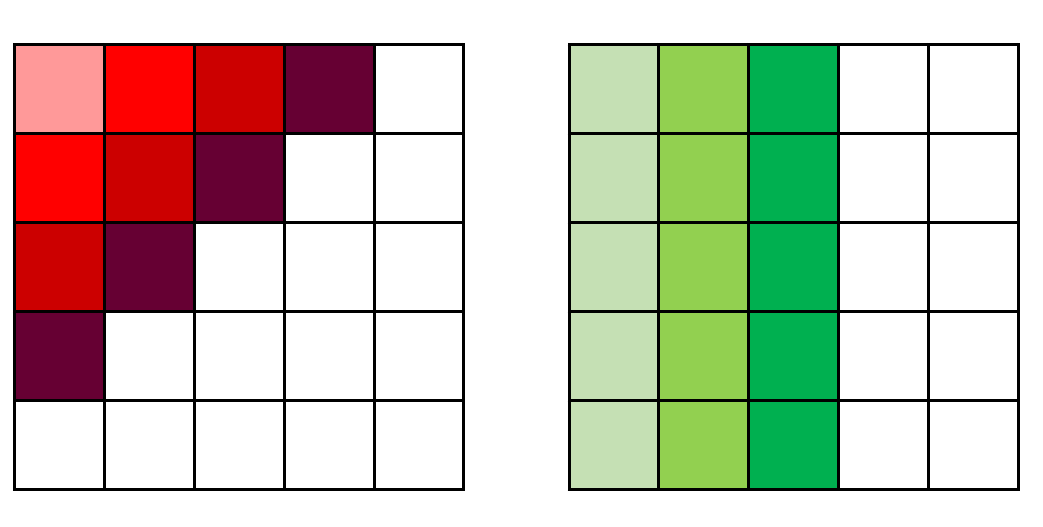}
    \caption{The decoding process of Red channel (left) and Green/Blue channel (right) for proposed AR model. For the red channel, anti-diagonal pixels can be decompressed in parallel, while for G and B, one column of pixels can be decoded in parallel.}
    \label{fig:ae_parallel}
    \vspace{-4mm}
\end{figure}

\textbf{Loss Function.} 
During training, we could jointly optimize the AR module and VQ-VAE module by maximizing the negative log-likelihood of the residual $r$:
\begin{equation}
     \mathcal{L}_{NLL} = \mathbb{E}_{x\sim p(x)} \left[ -\log p(r|x)\right],
     \label{eq:nll_loss}
\end{equation}
where $p(x)$ denotes the distribution of the training dataset. The lower the negative log-likelihood is, the fewer bits we need to compress the residual $r$. See ~\cref{subsec:inter_model_coder} for the more implement detail. 

Thus, the total objective function to optimize our neural compression model can be formulated by:
\begin{equation}
    \mathcal{L}_{total} = \mathcal{L}_{NLL} + \alpha \mathcal{L}_{VQ},
    \label{eq:total_loss}
\end{equation}
where $\alpha$ balances the two terms. 

Our lightweight model with the Three-Way Auto-regressive and VQ-VAE achieves a high inference speed, which plays an important role in our overall framework.

\subsection{Coder Design}
We design our entropy coder based on rANS~\cite{ans}, and save computation time by precalculating intermediate results.
Previous works such as tANS~\cite{ans} already adopt this idea. However, tANS can only code a fixed distribution, which is inappropriate for AI algorithms.
On the other hand, our proposed coder is more similar to dynamic ones, as it supports distributions with variable parameters.
The only difference is that the parameters need to be quantized, and probability mass needs to be calculated for all of them. Thus we call it a semi-dynamic entropy coder.

To code with rANS, one first needs to choose a constant $M$, then for each symbol $x$, quantize its probability mass function (denoted as $P_x$), so that $\sum_x P_x = 2^M$. Let $C_x$ be the cumulative distribution function of $x$. It satisfies that $C_x = \sum_{i<x} P_x$. To encode pixels within range $[0,255]$, $M$ is usually chosen in $[10, 15]$.

In this work, we modify rANS slightly, and propose \cref{alg:ans_encode} and \cref{alg:ans_decode} as the logic behind.

\begin{algorithm}
\caption{Modified rANS: Encode $x$}\label{alg:ans_encode}
\begin{algorithmic}[1]
\Require $M$, $P_x, C_x$, State $S$, stream
\State $S \gets S + 2^M$\label{alg_step:encode_pre_add}
\While{$S \ge P_x * 2$}
    \State $\mathtt{push\_stream}(S\mod 2)$ \Comment{last bit of $S$ to stream}
    \State $S \gets \lfloor S \div 2 \rfloor$
\EndWhile
\State $S \gets S - P_x + C_x$\label{alg_step:encode_post_add}
\end{algorithmic}
\end{algorithm}
\vspace{-4mm}

\begin{algorithm}
\caption{Modified rANS: Decode one symbol}\label{alg:ans_decode}
\begin{algorithmic}[1]
\Require $M$, $P_x, C_x$ for all $x$, State $S$, stream
\State $x \gets \mathtt{binary\_search}(S)$: $ C_x \le S < C_x + P_x$
\State $S \gets S - C_x + P_x$\label{alg_step:decode_pre_add}
\While{$S < 2^M$}
    \State $S \gets S \times 2 + \mathtt{pop\_stream()}$ \Comment{last bit of stream}
\EndWhile
\State $S \gets S - 2^M$\label{alg_step:decode_post_add}
\State \Return $x$
\end{algorithmic}
\end{algorithm}
\vspace{-3mm}

Loops and binary search form the time-consuming steps for these algorithms, so tables are built to cache these results.

For encoding, one solution would be to create a table to store the number of bits to be pushed for each possible value of $S$ and $P_x$. Each value can be represented as a 16-bit unsigned integer, so the memory needed is $2^{2M+1}$ bytes.
In this work, we assume the number of distributions for different symbols is small. Denote the total number of distributions as $D$, and the distribution index $d \in [0, D)$ for each symbol can be calculated via the network, and let $X$ be the number of colors. Since $S$ is always in $[2^M, 2^{M+1})$ after Step \ref{alg_step:encode_pre_add}, so for a fixed $d$, the number of loops can differ only 1. So we build a table $\delta$ with dimension $D \times X$ such that
\begin{equation}
    \frac{S}{2^{\lfloor(\delta[d, x] + S) / 2^M \rfloor}} \in [P_x, P_x * 2).
\end{equation}
To further simplify the algorithm, Step \ref{alg_step:encode_post_add} is merged with Step \ref{alg_step:encode_pre_add} of next symbol in this work, and table $\phi$ is built such that $\phi[d, x] = 2^M - P_x + C_x$. $S$ needs to be initialized in $[2^M, 2^{M+1})$ then. The encoding process now is shown in \cref{alg:tans_encode}. Both $\delta$ and $\phi$ use unsigned 16-bit integers, so the memory needed for tables are $4 \times D \times X$ bytes. As the total memory needed for all $P_x$'s and $C_x$'s are also $4 \times D \times X$ bytes, which is not needed any more, \cref{alg:tans_encode} does not cost any additional memory.
\vspace{-2 mm}
\begin{algorithm}
\caption{ANS-AI: Encode $x$}\label{alg:tans_encode}
\begin{algorithmic}[1]
\Require $M$, $\delta$, $\phi$, $x$, $d$, State $S$, stream

\State $b = \lfloor (\delta[d, x] + S) / 2^M \rfloor$
\State $\mathtt{push\_stream}(S, b)$ \Comment{last $b$ bits of S to stream}
\State $S \gets \lfloor S / 2^b \rfloor + \phi[d, x]$

\end{algorithmic}
\end{algorithm}
\vspace{-3 mm}

In terms of decoding, for each distribution index $d$ and state $S$, we maintain a table $\Theta$ to store the symbol to be decoded, a table $B$ for number of bits to pop, and a table $N$ to merge Step \ref{alg_step:decode_pre_add} to \ref{alg_step:decode_post_add}. Since $\Theta$ and $B$ use 8-bit unsigned integers, and $N$ use a 16-bit unsigned one, the total memory consumed for these tables are $D \times 2^{M+2}$ bytes. One alternative is to replace $N$ and $B$ with $\phi$ and $\delta$, reducing the memory to $D \times 2^M + 4 \times D \times X$, but this will make the computation more complicated, and reduce the decoding speed by about 50\% from our experiments.
\vspace{-2 mm}
\begin{algorithm}
\caption{ANS-AI: Decode one symbol}\label{alg:tans_decode}
\begin{algorithmic}[1]
\Require $M$, $\Theta$, $B$, $N$, $d$, State $S$, stream
\State $x \gets \Theta[d, S]$
\State $S \gets N[d,S] + \mathtt{pop\_stream}(B[d,S])$ \Comment{last $B[d,S]$ bits}
\State \Return $x$
\end{algorithmic}
\end{algorithm}
\vspace{-3 mm}

Our proposed coder not only simplifies the computation, but also makes it possible to implement with a machine learning framework such as PyTorch by removing all the logical units. 

\subsection{Integration of Model and Coder}
\label{subsec:inter_model_coder}
\begin{figure}[t]
    \centering
    \includegraphics[width=0.366\textwidth]{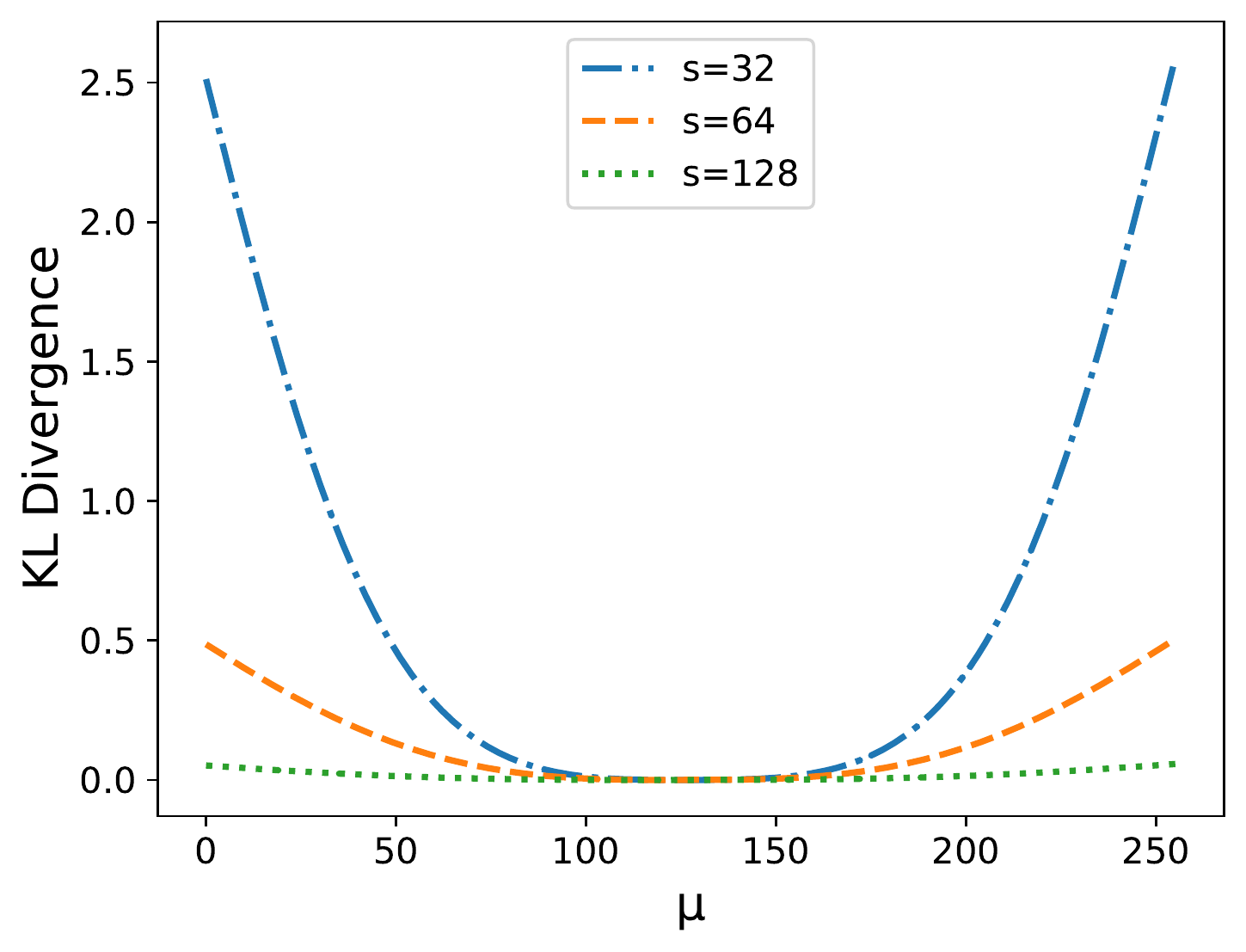}
    \caption{KL divergence between real truncated logistic distribution and approximated distribution, which indicates the BPD loss for the approximation. The figure shows that the approximated distribution is very close to the real one when $\mu \in [96,160]$.}
    \label{fig:kl_u}
    \vspace{-4 mm}
\end{figure}

To guarantee the effectiveness of the coder, the total number of distributions $D$ needs to be small.
In our model, each pixel of $r$ is predicted with a truncated logistic distribution with location parameter $\mu$ and scale parameter $s$, truncated between 0 and 255. Denote the distribution as $r \sim L(\mu, s, 0, 255)$. Since the number of means is required to be quantized to at least 256 values, $D$ would be over one thousand.


In this work, we predict $r - \mu + 128$ within \textbf{uint8} space, \ie, the prediction of $r \sim L(\mu, s, 0, 255)$ is approximated as $(r - \mu + 128) \bmod 256 \sim L(128, s, 0, 255)$. To make it as accurate as possible, $\mu$ needs to be as close to $128$ as possible. We apply the following two techniques for this:
\begin{compactenum}
    \item In VQ-VAE module, instead of predicting residual $r$ directly, we predict $(r + 128)\bmod 256$.
    \item We add a sigmoid layer to predict the final value of $\mu$ in VQ-VAE module to make sure it is close to 128.
\end{compactenum}

\cref{fig:kl_u} and \cref{fig:predicted_u} show that this technique almost deals with no BPD loss.

\section{Experiments}
In this section, we evaluate the model and coder effectiveness with extensive experiments and we conduct experiments on both low resolution and high resolution images which demonstrate our framework obtains the fastest compression speed with a competitive compression ratio. 

\subsection{Experimental Settings}
\label{subsec:exp_settings}
\textbf{Implementation Details.} The AR and VQ-VAE module are jointly trained. The AR module contains 12 parameters. In VQ-VAE, the codebook size $K$ is set to 256 and the codebook dimension $D$ is set to 32. 
The encoder and decoder both have 4 residual blocks, all with 32 channels. The encoder downsampling ratio is 2, and pixel shuffle is used when upsampling in the decoder. Thus, an image $x $ with shape $32 \times 32 \times 3$ would be quantized to $16 \times 16 \times 1$ vectors. 
$\beta$ in ~\cref{eq: vq loss} is set to 0.25 as in \cite{conf/nips/van17}, and $\alpha$ in ~\cref{eq:total_loss} is set to 125. 

We do all experiments in a Linux docker with Intel(R) Xeon(R) Gold 6151 CPU (3.0 GHz, 72 threads), and one NVIDIA Tesla V100 32GB.
For speed tests, we ignore the time for disk I/O, since it has nothing to do with compression algorithms.
But for fair comparison between AI and traditional methods, we do count the time for data transfer between CPU memory and GPU memory.
That means for all AI algorithms, timer starts when data is loaded into RAM and ends when compressed data is stored in RAM. 

Different from previous AI compression works which focus on density estimation, this paper targets practical compression. Therefore, for BPD values reported in this paper, all storage required to recover the original image is counted, including meta data, Huffman tables, auxiliary bits, etc.. For AI algorithms, real BPD is reported instead of theoretical one unless specified otherwise, which means for each image, we calculate the number of bytes (instead of bits) needed for each image, then take the average.

\begin{table}[t]
  \centering
    \caption{Ablation study on the model architecture. Theoretical BPD on CIFAR10 is reported.}
  \begin{tabular}{@{}lc@{}}
    \toprule
    Model & BPD \\
    \midrule
    Vanilla Auto-encoder (AE) & 5.53 \\
    VQ-VAE & 5.13\\ 
    Three-Way Auto-regressive (AR) & 4.92 \\
    AR + AE & 4.35 \\
    \textbf{AR + VQ-VAE} & \textbf{4.17}\\
    \bottomrule
  \end{tabular}
  \label{tab:arch_components}
  \vspace{-4 mm}
\end{table}

\begin{table}
\centering
\caption{Ablation study on the receptive field of AR model and the parallel mechanism. Theoretical BPD and decompress throughput on CIFAR10 (red channel) are reported. Receptive field means the current point is predicted by how many previous points.}
\label{tab:ar_receptive_field}
\begin{tabular}{ccr} 
\toprule
\multicolumn{1}{c}{Receptive Field} & BPD    & \multicolumn{1}{c}{\begin{tabular}[c]{@{}c@{}}Throughput\\(MB/s)\end{tabular}}  \\ 
\midrule
3 (with parallel)   & 5.7714 & \textbf{382.5}\\
3                   & 5.7714 & 48.5 \\
4                                  & 5.7737 & 47.5                                                                            \\
5                                  & 5.7708 & 46.0                                                                            \\
6                                  & 5.7701 & 44.8                                                                            \\
7                                  & 5.7449 & 44.0                                                                            \\
\bottomrule
\end{tabular}
\vspace{-4 mm}
\end{table}

\textbf{Datasets.}
For low resolution experiments, we train and evaluate on CIFAR10, ImageNet32/64. For high resolution experiments, we train our model on 50000 images from Open Image Train Dataset\cite{OpenImages} which are randomly selected and preprocessed as \cite{conf/cvpr/MentzerATTG19} to prevent overfitting on JPEG artifacts. Then, we evaluate on natural high resolution datasets, CLIC.mobile, CLIC.pro\cite{clic} and DIV2K\cite{div2k} to test the model generalization ability. For 
high resolution setting, we crop 32 $\times$ 32 patches for evaluating. 

\begin{figure}[t]
    \centering
    \includegraphics[width=0.366\textwidth]{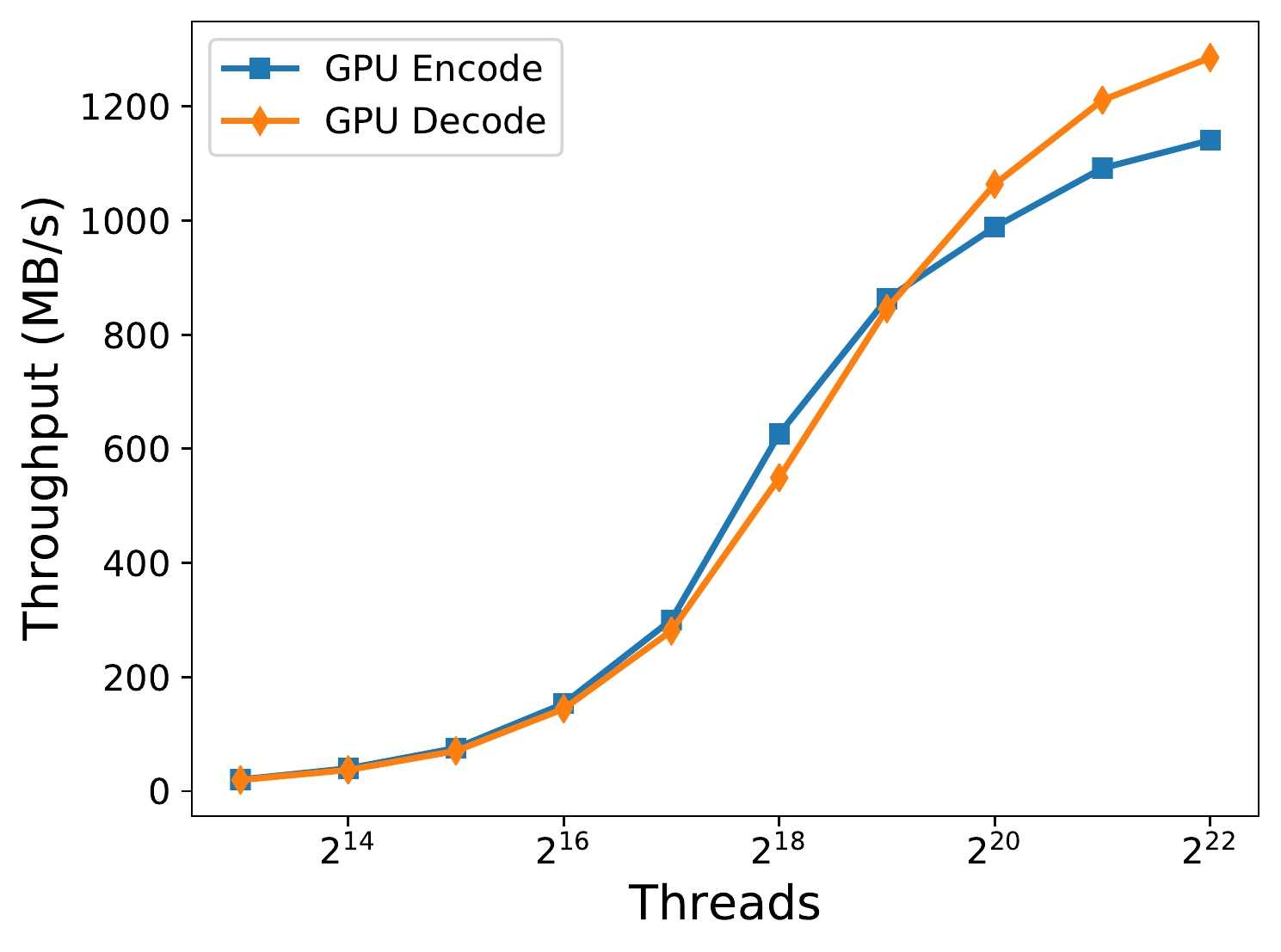}
    \caption{Encoding/decoding speed of the GPU coder. When the number of threads (batch size) is small, the throughput of the GPU coder is low, but it can benefit from large number of threads.}
    \label{fig:coder_speed}
    \vspace{-4 mm}
\end{figure}

\subsection{Model Effectiveness}
\label{sub:model_ablation}

\begin{table*}
\small
\centering
\caption{Compression performance in BPD and throughput in compression and decompression, compared to both engineered codec PNG\cite{journals/rfc/rfc2083}, WebP\cite{webp}, FLIF\cite{conf/icip/SneyersW16}, JPEG2000\cite{jpeg2000/books/daglib/0007442} and the learned L3C\cite{conf/cvpr/MentzerATTG19}.}
\begin{tabular}{lccc|ccc|rr} 
\toprule
\multicolumn{1}{c}{\multirow{2}{*}{BPD~}} & \multirow{2}{*}{CIFAR10} & \multirow{2}{*}{ImageNet32} & \multirow{2}{*}{ImageNet64} & \multirow{2}{*}{DIV2K} & \multirow{2}{*}{CLIC.pro} & \multirow{2}{*}{CLIC.mobile} & \multicolumn{2}{c}{Throughput (MB/s)~}  \\
\multicolumn{1}{c}{}  &    &     &     &     &     &    & Compress      & Decompress \\ 
\midrule
PNG\cite{journals/rfc/rfc2083} (fastest)  & 6.44    & 6.78    & 6.09   & 4.64    & 4.23     & 4.39   & 55.9     & 118.2  \\
PNG\cite{journals/rfc/rfc2083} (best)    & 5.91    & 6.41    & 5.77   & 4.23    & 3.90       & 3.80     &3.0     & 83.5 \\
WebP\cite{webp} (-z 0)  & 4.77    & 5.44    & 4.92      & 3.43    & 3.22   & 3.03   & 29.8     & 99.1  \\
FLIF\cite{conf/icip/SneyersW16} (--effort 0)   & 4.27  & \textbf{5.06} & 4.70 & 3.24    & 3.03       & 2.82    & 6.2      & 4.2   \\
JPEG2000\cite{jpeg2000/books/daglib/0007442}   & 6.75   &  7.50    &   6.08   & 4.11  &  3.79      &  3.94     &  7.6    & 9.1    \\
L3C\cite{conf/cvpr/MentzerATTG19}    &    4.55      & 5.19   &    \textbf{4.57}    &   \textbf{3.13}      &       \textbf{2.96}     &       \textbf{2.65}      &  12.3     &  6.3      \\
\textbf{PILC (Ours)}  & \textbf{4.23}      & 5.10    & 4.76     & 3.41   & 3.23   & 3.00    & \textbf{180.3} & \textbf{217.2} \\
\bottomrule
\label{tab:overall_speed}
\end{tabular}
\vspace{-4 mm}
\end{table*}

\textbf{Effectiveness of model architecture.} We analyze single and combinations of each component to evaluate the effectiveness of model architecture. In \cref{tab:arch_components}, the vanilla Auto-encoder (AE) contains the same encoder and decoder architecture as VQ-VAE with latent feature shape 16$\times$16$\times$1. In AE and VQ-VAE, we directly use the model to predict the distribution of the original image. In the single AR, we compress the residual with a static entropy coder to get the BPD. AR+AE follows the same process as AR+VQ-VAE. As shown in \cref{tab:arch_components}, VQ-VAE outperforms AE which demonstrates the effectiveness of the learned codebook $Z$. We could intuitively explain that 16$\times$16$\times$1 vector indexes would contain more diversity features than 16$\times$16$\times$1 numbers in AE's latent. Our proposed AR+VQ-VAE model achieves the best theoretical BPD on CIFAR10 which suggests the effectiveness of predicting residual.

\textbf{Effectiveness of Three-Way AR and parallel mechanism.} We compare different receptive fields of the AR model and the parallel mechanism. Experiments are conducted on red channel of CIFAR10 images. The receptive field of AR model means the current point is predicted by how many previous points. In our model, the receptive field is three. As shown in \cref{tab:ar_receptive_field}, as the receptive field increases, the BPD indeed descends, but the decline is very limited. With the parallel mechanism we designed in \cref{subsec:model_architecture}, the Three-Way AR achieves a throughput of 382.5 MB/s when recovering original images from residual in decompression, and outperforms other settings which cannot be paralleled.

\subsection{Coder Effectiveness}

We implement the GPU coder based on \cref{alg:tans_encode} and \cref{alg:tans_decode}. We test the coder on the randomly generated data (1000000 pieces, each with length 1000, and range 0-255). The signals have the same dimension with each value in 0-7. For each setting, we run 10 times and take the average speed. From \cref{fig:coder_speed}, we see that GPU coder is slow for the small number of threads. Fortunately, the number of threads can easily reach one million with a V100 32G. Therefore, we can inference on million batches, and save the result in GPU. Then we can code all of the batches at once.
This is beneficial for the GPU coder, in the sense that the data transfer times and the amount is smaller for both compression and decompression.

\subsection{Distribution Effectiveness}
\begin{figure}[t]
    \centering
    \includegraphics[width=0.38\textwidth]{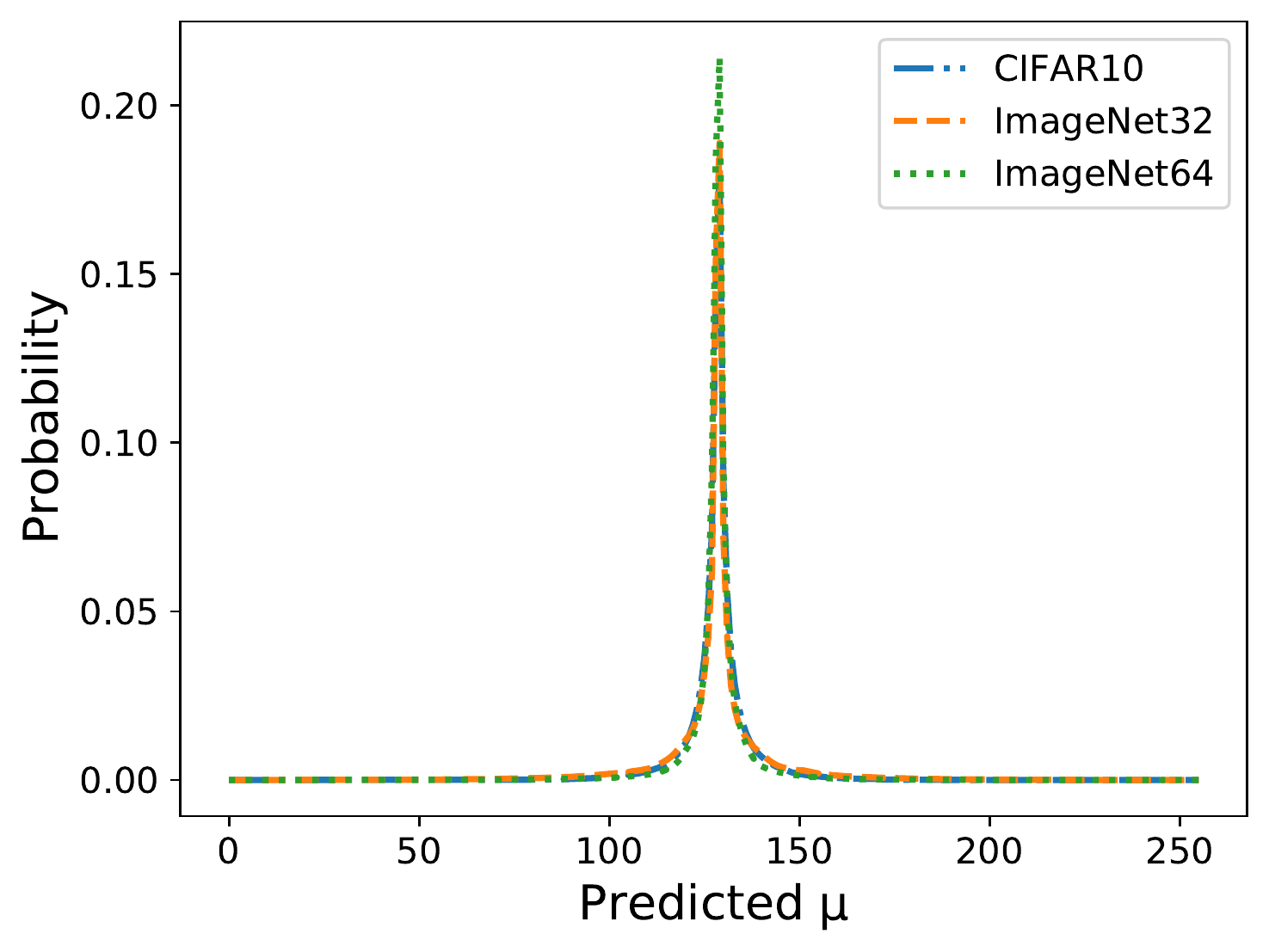}
    \caption{The probability mass of the predicted $\mu$ on multiple datasets. They roughly follow the double exponential distribution with location parameter 128. }
    \label{fig:predicted_u}
    \vspace{-4 mm}
\end{figure}
We count the predicted location parameter $\mu$ for the distribution specified in \cref{subsec:inter_model_coder}.
It can be shown in \cref{fig:predicted_u} that for all the three datasets, the predicted values nearly always fall in $[96, 160]$. As stated in \cref{subsec:inter_model_coder}, this cause almost no loss in BPD.

\subsection{Compression performance}

\textbf{Compare with exists methods.} We reproduce all methods in ~\cref{tab:overall_speed} on the same platform explained in \cref{subsec:exp_settings} for fair comparison. We use Python CV2 lib for PNG and JPEG2000. We adopt FLIF version 1.0.1 and use command `flif -e --effort=0' for fastest setting. For WebP, we use version 1.2.1 and the command `cwebp -lossless -z 0'. For L3C, we reproduce the results using the official model\cite{l3c_github} and we replace the original coder in L3C with ours to test the speed as the original coder is very slow. The throughput results are reported based on high resolution settings. ~\Cref{tab:overall_speed} shows our framework achieves the fastest compress and decompress throughput. Also, the framework achieves a competitive compression ratio on low resolution images and we still outperform PNG, WebP and JPEG2000 on high resolution images.

 
\textbf{End-to-end throughput.} To fully use the GPU coder and make the measured speed more accurate, we duplicate the CIFAR10 validation set 100 times (1 million images). ~\Cref{tab:decomposition} shows the details of the throughput and time taken for each step.

\begin{table}[t]
\small
\centering
\caption{Throughput \& decomposition of time for the whole compression/decompression process on CIFAR10. Throughput is calculated with respect to the size of the original data. Time is measured as $\mu$s per image.}
\begin{tabular}{ccrr} 
\toprule
\multicolumn{1}{l}{}        & Phase                 & \multicolumn{1}{c}{\begin{tabular}[c]{@{}c@{}}Throughput\\~ ~ (MB/s)\end{tabular}} & \multicolumn{1}{c}{\begin{tabular}[c]{@{}c@{}}Time\\($\mu$s)\end{tabular}}  \\ 
\midrule
\multirow{5}{*}{Compress}& RAM $\rightarrow$~GPU & 9246&0.33  \\
& Model Inference  & 276  & 11.11    \\
& Coder Encode  & 675    & 4.55       \\
& GPU $\rightarrow$~RAM & 2985   & 1.03 \\
& \textbf{Total}   & \textbf{180}   & \textbf{17.02}      \\ 
\midrule
\multirow{7}{*}{Decompress} & RAM $\rightarrow$~GPU & 11101  & 0.28  \\
 & Coder Decode & 11091   & 0.28   \\
 & VQ-VAE Decode  & 721   & 4.26    \\
& Code Decode     & 672    & 4.57    \\
& AR Decode  & 869     & 3.53      \\
 & GPU $\rightarrow$~RAM & 2521  & 1.20   \\
& \textbf{Total}    & \textbf{217}   & \textbf{14.12}     \\
\bottomrule
\end{tabular}
\label{tab:decomposition}
\vspace{-4 mm}
\end{table}

\subsection{Limitation Analysis}
Our framework, although very light-weighted, requires an AI chip such as Tesla V100 to work efficiently, which is inaccessible for most end-users. For example, one image with 2K resolution compresses only about 1.3 MB/s in the CPU of our server machine. Therefore, our framework is only applicable to file/cloud servers.


\section{Conclusion}
In this work, we develop the first efficient end-to-end generative model based image lossless compression framework with throughput of about 200 MB/s, and compression ratio 30\% better than PNG. To achieve this result, we develop a model that combines AR model and VQ-VAE, and a semi-dynamic coder that is computation efficient.

To further improve PILC, future works could focus on algorithms that are more friendly in everyday devices, such as a PC, or mobile phone. This requires a network with much less FLOPS than the current VQ-VAE one.
{\small
\bibliographystyle{ieee_fullname}
\bibliography{egbib}
}

\appendix

\section{Coder Design}
With the same notation as ours, the original rANS can be written as:
\begin{equation}\label{eq:original_rans_encoding}
    \text{Encoding: } S \gets 2^M \times \lfloor S / P_x \rfloor + C_x + S \bmod P_x,
\end{equation}
\begin{equation}\label{eq:original_rans_decoding}
    \text{Decoding: } S \gets P_x \times \lfloor S / 2^M \rfloor + S \bmod 2^M - C_x,
\end{equation}
where in decoding, same as the modified one, to get $x$, binary search is also needed to be applied on $S \bmod 2^M$.

Let $n$ be the message length, if applying \cref{eq:original_rans_encoding} and \cref{eq:original_rans_decoding} directly, the time complexity is $O(n^2)$. So in real applications, $S$ is usually truncated to a range, such as $[2^{32}, 2^{64})$, reducing the time complexity to $O(n)$ for encoding and $O(n \log X)$ for decoding, with minor reduction in compression ratio. However, it is still not efficient enough for real-time applications due to:
\begin{compactitem}
\item $P_x$ and $C_x$ need to be calculated online for each $x$.
\item One division operation and one modular operation are required for encoding.
\item Binary search is needed for decoding.
\end{compactitem}

In this work, $S$ is truncated to $[2^M, 2^{M+1})$, which boosts the efficiency, but affects the compression ratio more severely, and more memory is needed. In next subsections, we prove that the BPD loss and memory are both upper-bounded by an acceptable value.

\subsection{BPD loss for ANS-AI}
\begin{lemma}
BPD for the ANS-AI is at most $2 - \log_2 e$ worse than the original rANS.
\end{lemma}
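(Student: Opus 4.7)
The plan is to count, for each encoded symbol, the exact number of bits the while loop pushes to the stream and compare it to the ideal code length $\ell_x := -\log_2 p_x = M - \log_2 P_x$ used (up to quantization error) by the un-streamed rANS of \cref{eq:original_rans_encoding}. From \cref{alg:ans_encode}, after the preliminary $S \gets S + 2^M$ step the state lies in $[2^M,2^{M+1})$, and the loop halves $S$ exactly until $S < 2P_x$. Therefore the number of halvings (and thus of bits pushed) is $b = \lfloor \log_2(S/P_x)\rfloor$, giving the key inequality
\[
b \;\le\; \log_2(S/P_x) \;=\; \ell_x + (\log_2 S - M).
\]
Summed over $N$ symbols (with the stored final state contributing $O(1)$ to the total bit count and hence $o(1)$ per symbol), the BPD excess of ANS-AI over the un-streamed rANS is at most the $N$-average of $\log_2(S/2^M)$ along the encoding trajectory.

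Next I would show this average is $2-\log_2 e$ in the stationary regime. The relevant fact is that, when the source matches the model ($p_x = P_x/2^M$), the encoder preserves the uniform distribution of $S$ on $[2^M,2^{M+1})$: after a step, $S = 2^M + (C_x + S' - P_x)$ with $S' \in [P_x,2P_x)$, and under uniformity of the incoming $S$ the low part $C_x + S' - P_x$ is distributed on $[C_x,C_x+P_x)$ with density $1/P_x$ conditional on $x$; mixing over $x$ with weights $p_x = P_x/2^M$ exactly cancels the $1/P_x$ and yields density $1/2^M$ on $[0,2^M)$, hence $S$ uniform on $[2^M,2^{M+1})$. A direct integral then gives
\[
\mathbb{E}[\log_2 S - M] \;=\; \int_1^2 \log_2 u\,du \;=\; \tfrac{1}{\ln 2}\bigl[u\ln u - u\bigr]_1^2 \;=\; 2 - \tfrac{1}{\ln 2} \;=\; 2 - \log_2 e,
\]
and chaining with the previous display yields $\mathbb{E}[b] - \ell_x \le 2 - \log_2 e$ per symbol, exactly the claimed gap.

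The main obstacle is justifying the uniform stationarity of $S$ in a form clean enough for a lemma. The one-step preservation used above is only exact in a continuous/granular idealization; for integer states the flooring $S' = \lfloor S/2^b \rfloor$ spreads $S_i$ over $[P_x, 2P_x)$ in a slightly uneven way (the range of $S_i$ for a given $b$ need not be an exact multiple of $2^b$), and one must either invoke mixing of the induced Markov chain on $S$ or absorb the discrepancy into a lower-order term. An alternative, non-uniform route is a telescoping estimate in $\log_2 S_i$, but that only yields the trivial per-symbol bound of $1$; the tighter constant $2-\log_2 e$ genuinely depends on the ``on-average, $\log_2 S \approx M + 2 - \log_2 e$'' fact and hence on the uniform equilibrium.
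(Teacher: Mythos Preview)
Your argument is correct and is essentially the paper's: both identify the pushed bit count as $\lfloor\log_2(S/P_x)\rfloor$, take $S$ uniform on $[2^M,2^{M+1})$, and compute the resulting average excess as $2-\log_2 e$ (the paper phrases the comparison as ANS-AI versus the Shannon lower bound $M-\tfrac{1}{2^M}\sum_i P_i\log_2 P_i$ on rANS and writes the expectation as $\tfrac{1}{2^M}\int_{2^M}^{2^{M+1}}\log_2 j\,dj$, but the arithmetic is identical to yours). As for your obstacle, the paper simply asserts the uniformity of $S$ in one line (``$S$ before each encoding is uniform in $[2^M,2^{M+1})$ since $P_i$ is proportional to $p(x=i)$'') without any of the discrete-versus-continuous analysis you raise, so your proposal is already at least as rigorous as the published proof.
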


\begin{proof}
As ANS-AI and the modified rANS are numerically equivalent, it suffices to prove the effectiveness of the latter.
Without loss of generality, assume quantization for pdf values does not introduce bias in density estimation, which means $p(x=i) = P_i / 2^M$ for all $i$, otherwise both original and modified ones are affected by the same way.
Also each symbol $x$ is independent of each other.
Unless specified otherwise, all $\log$ in this material means logarithm function with base 2. Then from Shannon's source coding theorem:
\begin{equation}\label{eq:bpd_ans}
\begin{split}
    \mathtt{BPD(rANS)} & \ge \frac{\sum_i P_i \times \log 2^M / P_i}{2^M} \\
    & = M - \frac{\sum_i P_i \times \log P_i}{2^M}.
\end{split}
\end{equation}

For the ANS-AI, BPD value equals to the expected number of bits to push to steam during encoding. The value $S$ before each encoding is uniform in $[2^M, 2^{M+1})$ since $P_i$ is proportional to $p(x=i)$. Therefore:
\begin{equation}\label{eq:bpd_ans_ai}
\begin{split}
    \mathtt{BPD(ANS\_AI)} & = \sum_{j \in [2^M, 2^{M+1})}\frac{\sum_i P_i \times \lfloor \log (j/P_i) \rfloor / 2^M} {2^M} \\
    & \le \frac{\sum_{j \in [2^M, 2^{M+1})} \sum_i P_i \times \log (j/P_i)}{2^{2M}} \\
    & = \frac{\sum_{j \in [2^M, 2^{M+1})}\log j}{2^M} - \frac{\sum_i P_i \times \log P_i}{2^M} \\
    & \le \frac{\int_{2^M}^{2^{M+1}} \log j}{2^M} - \frac{\sum_i P_i \times \log P_i}{2^M} \\
    & = M + 2 - \log e - \frac{\sum_i P_i \times \log P_i}{2^M}.
\end{split}
\end{equation}

Bring \cref{eq:bpd_ans} and \cref{eq:bpd_ans_ai} together, we get
\begin{equation}
    \mathtt{BPD(ANS\_AI) - BPD(rANS) \le 2 - \log e}.
\end{equation}
\end{proof}

\subsection{Memory consumption for ANS-AI}

\begin{lemma}\label{lemma:range_delta}
In our frame work, when $M \le 12$, $\delta$ can be represented as an 16 bit unsigned integer.
\end{lemma}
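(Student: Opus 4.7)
The plan is to reduce the lemma to a short arithmetic bound on a closed-form expression for $\delta[d,x]$. By the definition preceding Algorithm 3, $\delta[d,x]$ must satisfy $\lfloor (\delta+S)/2^M \rfloor = b$, where $b$ is the number of halving iterations of the modified rANS encoder when the state is $S \in [2^M, 2^{M+1})$ and the symbol probability mass is $P_x$. So the proof splits into two subtasks: derive $\delta$ in closed form as a function of $P_x$ and $M$, and then maximize that expression over the admissible $P_x$.

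For the closed form, I would first observe that the halving loop terminates at the unique $b$ with $\lfloor S/2^b \rfloor \in [P_x, 2P_x)$, hence $b = \lfloor \log_2(S/P_x) \rfloor$. Setting $b_0 = \lfloor \log_2(2^M/P_x) \rfloor$, as $S$ sweeps $[2^M, 2^{M+1})$ the correct $b$ either stays at $b_0$ (only when $P_x$ is a power of two) or jumps from $b_0$ to $b_0+1$ at the single point $S^{\star} = P_x \cdot 2^{b_0+1}$. Because $\delta+S$ ranges over an interval of length exactly $2^M$, the step function $\lfloor(\delta+S)/2^M\rfloor$ crosses at most one multiple of $2^M$; matching this crossing with $S^{\star}$ pins $\delta$ to the single value $\delta = (b_0+1)\cdot 2^M - P_x\cdot 2^{b_0+1}$, and the degenerate power-of-two case reduces to the same formula.

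For the bound, I would maximize $\delta$ over $P_x \in \{1, 2, \ldots, 2^M\}$. For each fixed $b_0$, $\delta$ is linear and decreasing in $P_x$, so the maximum is attained at the smallest $P_x$ consistent with that $b_0$. A short case split on $b_0$ (using $P_x \ge 1$, which forces $b_0 \le M$) then shows the global maximum occurs at $P_x = 1$, $b_0 = M$, yielding $\delta_{\max} = (M-1)\cdot 2^M$. Substituting $M \le 12$ gives $\delta_{\max} \le 11 \cdot 4096 = 45056 < 2^{16}$. Non-negativity, required for unsigned representation, is equivalent to $b_0 \ge 1$, i.e.\ $P_x \le 2^{M-1}$; this holds throughout the paper's framework since the quantized truncated-logistic pmfs never concentrate more than half the mass on a single bin.

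The main obstacle I anticipate is the closed-form derivation, which hinges on the coincidence that the length $2^M$ of the state interval exactly matches the period of $\lfloor\cdot/2^M\rfloor$. This coincidence is what makes a single scalar $\delta[d,x]$ sufficient for every $S \in [2^M, 2^{M+1})$ and rules out the need for more than two adjacent loop counts; cleanly handling the power-of-two edge case is the only other bookkeeping step. Once $\delta$ is in closed form, the final inequality $(M-1)\cdot 2^M < 2^{16}$ for $M \le 12$ is an immediate numerical check.
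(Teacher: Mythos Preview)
Your proposal is correct and essentially the same as the paper's proof. Both arrive at the closed form $\delta = k\cdot 2^M - P_x\cdot 2^k$ (your $(b_0{+}1)\cdot 2^M - P_x\cdot 2^{b_0+1}$ is the identical quantity, since the paper's $k$ equals your $b_0{+}1$ off powers of two and the formulas coincide on powers of two), bound it above by $(M-1)\cdot 2^M \le 11\cdot 2^{12} < 2^{16}$, and obtain non-negativity from the framework assumption $P_x \le 2^{M-1}$; the only stylistic difference is that the paper posits the closed form and verifies it case by case, whereas you derive it from the loop-count identity $b=\lfloor\log_2(S/P_x)\rfloor$.
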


\begin{proof}
Let $k$ be the integer such that $P_x \times 2^k \in [2^M, 2^{M+1})$. In our framework, it always satisfies that $P_x \in [1, 2^{M-1})$, thus $k \in [2, M]$. Therefore:
\begin{equation}\label{eq:range_delta}
\lfloor \frac{\delta[d,x]+S}{2^M} \rfloor = \begin{cases}
k & S \in [P_x \times 2^k, 2^{M+1}) \\
k - 1 & S \in [2^M, P_x \times 2^k)
\end{cases}
\end{equation}

Then it suffices to prove that when $\delta[d,x] = k \times 2^M - P_x \times 2^k$, both unsigned 16 bit constraint and \cref{eq:range_delta} are satisfied.

\begin{enumerate}
    \item $\delta[d,x] < k \times 2^M - 2^M < (M - 1) \times 2^M \le 11 \times 2^{12} < 2^{16}$, and $\delta[d,x] \ge 2 \times 2^M - P_x \times 2^k > 0$. Therefore $\delta[d,x]$ is in range of unsigned 16 bit integer.
    \item If $S \in [P_x \times 2^k, 2^{M+1})$, then $\delta[d,x] + S \ge \delta[d,x] + P_x \times 2^k = k \times 2^M$, and $\delta[d,x] + S < k \times 2^M - P_x \times 2^k + 2^{M+1} \le k \times 2^M - 2^M + 2^{M+1} = (k+1)\times 2^M$. Therefore, in this case, $\lfloor \frac{\delta[d,x]+S}{2^M} \rfloor = k$.
    \item If $S \in [2^M, P_x \times 2^k)$, then $\delta[d,x] + S < \delta[d,x] + P_x \times 2^k = k \times 2^M$, and $\delta[d,x] + S >= k \times 2^M - P_x \times 2^k + 2^M > k \times 2^M - 2^{M+1} + 2^M = (k-1) \times 2^M$. Therefore, in this case, $\lfloor \frac{\delta[d,x]+S}{2^M} \rfloor = k - 1$.
\end{enumerate}
\end{proof}

\begin{corollary}\label{corollary:delta_general}
In general, when $M \le 11$, $\delta$ can be represented as a 16 bit signed integer.
\end{corollary}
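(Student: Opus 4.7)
The plan is to recycle the construction and case analysis in the proof of Lemma \ref{lemma:range_delta} almost verbatim, but dropping the artificial restriction $P_x \in [1, 2^{M-1})$ that had kept $\delta$ nonnegative. Concretely, I would again set $\delta[d,x] = k \cdot 2^M - P_x \cdot 2^k$, where $k$ is the unique integer with $P_x \cdot 2^k \in [2^M, 2^{M+1})$. The same two subcase argument (splitting $S \in [P_x \cdot 2^k, 2^{M+1})$ from $S \in [2^M, P_x \cdot 2^k)$) will still show that $\lfloor (\delta[d,x]+S)/2^M \rfloor$ equals $k$ or $k-1$ respectively, so correctness of ANS-AI encoding is unaffected. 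The only substantive change is the allowable range of $k$ and hence of $\delta$.

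Next I would pin down the new range of $k$. Since a valid $P_x$ lies in $[1, 2^M)$, the doubling condition forces $k \in [1, M]$ (as opposed to $[2,M]$ in the lemma); in particular $k$ can be $1$, which is exactly the regime that was previously excluded. Feeding this into the formula for $\delta$ gives the two-sided bound
\begin{equation}
(k-2)\cdot 2^M \;<\; \delta[d,x] \;<\; (k-1)\cdot 2^M,
\end{equation}
where the left inequality uses $P_x \cdot 2^k < 2^{M+1}$ and the right inequality uses $P_x \cdot 2^k \ge 2^M$. Plugging in $k \le M$ and $k \ge 1$ then yields $-2^M < \delta[d,x] < (M-1)\cdot 2^M$.

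Finally I would specialize to $M \le 11$: the upper bound becomes $(M-1)\cdot 2^M \le 10 \cdot 2^{11} = 20480 < 2^{15}$, and the lower bound becomes $-2^M \ge -2^{11} = -2048 > -2^{15}$, so $\delta[d,x]$ safely fits inside the signed 16-bit range $[-2^{15}, 2^{15}-1]$. Note that tightening to $M \le 12$ (as in the unsigned version) fails on the upper side because $11 \cdot 2^{12} = 45056 > 2^{15}$, which explains the drop from $12$ to $11$ in the general statement.

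I do not expect any real obstacle here, since the argument is essentially a re-bookkeeping of Lemma \ref{lemma:range_delta}. The only mild subtlety is confirming that the case split remains sound when $k = 1$: in that regime $P_x \in [2^{M-1}, 2^M)$, the interval $[2^M, P_x \cdot 2^k)$ may be nonempty and can even stretch nearly to $2^{M+1}$, so one should verify that $\delta[d,x]+S$ still lands in the correct $2^M$-block in both subcases — a quick substitution shows it does. Beyond that, the work is just arithmetic on the two-sided bound for $\delta$.
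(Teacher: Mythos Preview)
Your proposal is correct and follows exactly the approach the paper intends: the paper's own justification is simply that with $P_x$ allowed in $[1,2^M)$ the argument of Lemma~\ref{lemma:range_delta} goes through unchanged, and you have spelled out precisely that re-bookkeeping (the drop from $k\ge 2$ to $k\ge 1$, the resulting two-sided bound on $\delta$, and the check that $M\le 11$ keeps $\delta$ inside $[-2^{15},2^{15})$). One cosmetic nit: your upper inequality $\delta[d,x] < (k-1)\cdot 2^M$ should be $\le$ since $P_x\cdot 2^k = 2^M$ is attainable, but this does not affect the conclusion.
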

If one is using distributions different from us, then the range of $P_x$ may  changed to $[1, 2^M)$, which means the probability of one symbol may be at least $0.5$, then the condition is changed to the one shown in Corollary \ref{corollary:delta_general}. The proof is almost the same with Lemma \ref{lemma:range_delta}.

\begin{corollary}
When $M \le 12$, ANS-AI encoding requires $4 \times D \times X$ bytes of memory, and decoding requires $D \times 2^{M+2}$ bytes.
\end{corollary}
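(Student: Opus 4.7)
The plan is to account for each table used by \cref{alg:tans_encode} and \cref{alg:tans_decode} separately and add up the bytes, reusing \cref{lemma:range_delta} as the main ingredient for the encoding side.

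First, for encoding, I would handle the two tables $\delta$ and $\phi$. For $\delta$, \cref{lemma:range_delta} already shows that when $M \le 12$ each entry fits in an unsigned 16-bit integer, so the contribution is $2 D X$ bytes. For $\phi[d,x] = 2^M - P_x + C_x$, I would observe that $C_x \le 2^M$ and $P_x \ge 1$ give $0 \le \phi[d,x] < 2^{M+1} \le 2^{13}$, so $\phi$ also fits in 16 bits and contributes another $2 D X$ bytes. Summing yields the claimed $4 \times D \times X$ bound for encoding.

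Second, for decoding, I would use the fact that just before each decoding step the state $S$ lies in $[2^M, 2^{M+1})$, so tables $\Theta$, $B$, and $N$ are each indexed by $D \times 2^M$ entries. Then I would check each entry range in turn: $\Theta[d,S] \in [0, X) \subseteq [0, 256)$ fits in one byte; $B[d,S]$ counts bits popped to bring $S$ back into $[2^M, 2^{M+1})$ so it is at most $M + 1 \le 13$ and fits in one byte; $N[d,S]$ records the refilled state, which again lies in $[2^M, 2^{M+1}) \subseteq [0, 2^{13})$ and so fits in 16 bits. The total is therefore $(1 + 1 + 2) \cdot D \cdot 2^M = D \times 2^{M+2}$ bytes.

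The main obstacle is not difficulty but bookkeeping: each table entry must be justified to lie in the claimed bit-width window under the assumption $M \le 12$. The $\delta$ case is the only genuinely subtle one, and it has already been settled by \cref{lemma:range_delta}; the remaining entries admit direct one-line range bounds from the invariants of the rANS state and from $X \le 256$ in the image-pixel setting. Hence assembling the corollary reduces to collecting these elementary bounds and summing.
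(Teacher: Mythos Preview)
Your proposal is correct and follows exactly the approach implicit in the paper: the corollary is not given a separate proof there, but is simply read off from the table descriptions in the coder-design section together with \cref{lemma:range_delta}, which is precisely the bookkeeping you carry out. One small imprecision worth noting is that $N[d,S]$ is not literally the refilled state but the shifted value $q\cdot 2^b$ before the popped bits are added; however, your range bound $N<2^{M+1}\le 2^{13}$ still holds, so the 16-bit conclusion is unaffected.
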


\subsection{Speed comparison of rANS and ANS-AI}
We compare the speed of rANS coder and ours in the same machine as stated in experiment section of the main body. For fair comparison, both coders are implemented using C++ with OpenMP, and run on CPU. As is shown in \cref{table:throughput_rans_ai}, ANS-AI is 10 to 100 times faster than rANS.
\begin{table}[t]
\centering
\caption{Throughput comparison of rANS and ANS-AI. To generate the test data, first 8 logistic distributions are fixed. For each symbol, we first select a distribution uniformly randomly, and then sample data according to the distribution.}
\begin{tabular}{ccrr}
\toprule
& Threads & \multicolumn{1}{c}{\begin{tabular}[c]{@{}c@{}} Throughput\\ rANS (MB/s)\end{tabular}} & \multicolumn{1}{c}{\begin{tabular}[c]{@{}c@{}}Throughput \\ANS-AI (MB/s)\end{tabular}} \\
\midrule
\multirow{4}{*}{Encode} & 1 & 5.1 & 81.7 \\
 & 4 & 10.8 & 239.0 \\
 & 8 & 15.9 & 433.9 \\
 & 16 & 21.6 & 598.8 \\
\midrule
\multirow{4}{*}{Decode} & 1 & 0.8 & 122.0 \\
 & 4 & 2.8 & 467.9 \\
 & 8 & 5.5 & 925.9 \\
 & 16 & 7.4 & 1190.0 \\
\bottomrule
\end{tabular}
\label{table:throughput_rans_ai}
\end{table}

\section{GPU Coder vs CPU Coder}
We use the same algorithm we designed to create a CPU Coder. As shown in \cref{tab:cpu_coder_gpu_coder}, we compare the composition throughput between CPU Coder and GPU Coder. The CPU Coder is set to 16 threads. We duplicate the CIFAR10 validation set for ten times for experiment.

{\em Compress.} In terms of compression time, the throughput of CPU and GPU coders is comparable because both require two RAM-GPU transfers. We notice that GPU to RAM time is different between the two Coder. Because the residual data, latent data, and distribution parameters must all be transferred from GPU to RAM for CPU Coder, but only compressed data must be transferred from GPU to RAM for GPU Coder, which takes less time.

\begin{table*}[t]
\centering
\caption{Throughput \& decomposition of time for the whole compression/
decompression process on CIFAR10 of CPU Coder and GPU Coder. Throughput is calculated
with respect to the size of the original data. Time is measured
as $\mu$s per image.}
\label{tab:cpu_coder_gpu_coder}
\begin{tabular}{ccrr|crr} 
\toprule
& \multicolumn{3}{c|}{CPU Coder}  & \multicolumn{3}{c}{GPU Coder}   \\
& Phase & \multicolumn{1}{c}{\begin{tabular}[c]{@{}c@{}}Throughput\\(MB/s)\end{tabular}} & \multicolumn{1}{c|}{\begin{tabular}[c]{@{}c@{}}Time\\($\mu$s)\end{tabular}} & Phase           & \multicolumn{1}{c}{\begin{tabular}[c]{@{}c@{}}Throughput\\(MB/s)\end{tabular}} & \multicolumn{1}{c}{\begin{tabular}[c]{@{}c@{}}Time\\($\mu$s)\end{tabular}}  \\ 
\midrule
\multirow{5}{*}{Compress}   & RAM to GPU  & 9186   & 0.33   & RAM to GPU   & 9246  & 0.33 \\
 & Model Inference & 276  & 11.12 & Model Inference & 276  & 11.11 \\
& GPU to RAM  & 1158  & 2.65  & Coder Encode   & 675   & 4.55   \\
& Coder Encode  & 873  & 3.52  & GPU to RAM   & 2985 & 1.03  \\
& \textbf{Total} & \textbf{174}   & \textbf{17.62}   & \textbf{Total}  & \textbf{180}  & \textbf{17.02}    \\ 
\midrule
\multirow{9}{*}{Decompress} & Coder Decode & 5932 & 0.52 & RAM to GPU      & 11101  & 0.28  \\
 & Latent to GPU  & 96254  & 0.03   & Coder Decode  & 11091 & 0.28  \\
 & VQ-VAE Decode  & 720   & 4.27  & VQ-VAE Decode  & 721  & 4.26   \\
& Distribution to RAM & 2426   & 1.27   & Coder Decode   & 672  & 4.57   \\
& Coder Decode    & 624   & 4.93   & AR Decode  & 869 & 3.53    \\
& Residual to GPU  & 8867 & 0.35   & GPU to RAM & 2521 & 1.20 \\
& AR Decode  & 849  & 3.62  & -   & -   & -    \\
& GPU to RAM    & 2515  & 1.22  & -    & -   & -   \\
 & \textbf{Total} & \textbf{186} & \textbf{16.21} & \textbf{Total}  & \textbf{217}  & \textbf{14.12}   \\
\bottomrule
\end{tabular}
\end{table*}

\begin{table}[t]
\small
\centering
\caption{The effect of different codebook size, dimension and model capacity. Theoretical BPD and throughput
on CIFAR10 are reported.}
\label{tab:resblock_and_codebook}
\begin{tabular}{cccclc} 
\toprule
\multirow{2}{*}{\begin{tabular}[c]{@{}c@{}}Mid Channel \\Dim\end{tabular}} & \multirow{2}{*}{Blocks} & \multicolumn{2}{c}{Codebook} & \multicolumn{1}{c}{\multirow{2}{*}{BPD}} & \multirow{2}{*}{\begin{tabular}[c]{@{}c@{}}Throughput\\~(MB/s)\end{tabular}}  \\   & & Size & Dim & \multicolumn{1}{c}{}    &   \\ 
\midrule
\textbf{32} & \textbf{4}  & \textbf{256} & \textbf{32}  & \textbf{4.17} & \textbf{277}   \\
32 & 4  & 128 & 32  & 4.19 & 306  \\
32 & 4  & 256 & 16  & 4.18 & 293  \\
32 & 4  & 256 & 64  &  4.18  &  249  \\
32 & 4  & 512 & 32  &  4.19   &  232 \\
32 & 4  & 512 & 64  &  4.19  &   210  \\
\midrule
32 & 8  & 256 & 32  &  4.16   &  177 \\
64 & 4  & 256 & 32  &  4.14   &  132  \\
64 & 8  & 256 & 32  &  4.10   &  74  \\
64 & 16  & 256 & 32  &  4.06   &  41  \\
\bottomrule
\end{tabular}
\end{table}

{\em Decompress.} The CPU Coder requires two additional RAM-GPU transfers during decompression because the Coder decoding is done on the CPU. The CPU Coder decodes latent vector indexes before sending them to the GPU. Then we need to transmit the distribution parameters to RAM after VQ-VAE decodes them. The residual is then decoded by the Coder and passed back to the GPU. The Three-Way Auto-regressive recovers original data from residual data. The GPU Coder, on the other hand, only requires two RAM-GPU transfers, and all Coder decoding is done on the GPU, which saves a lot of time.

\begin{figure}[t]
    \centering 
    \includegraphics[width=0.40\textwidth]{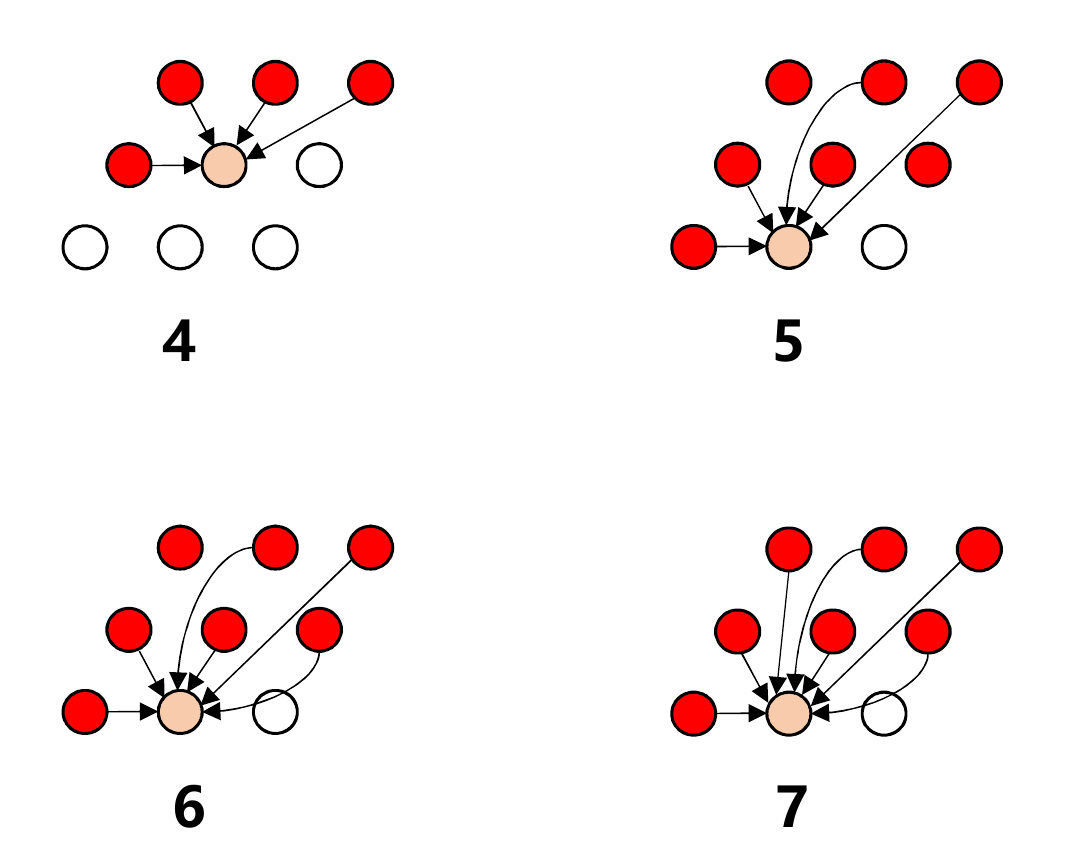}
    \caption{The receptive field rules of the auto-regressive model. Receptive field from 4 to 7. }
    \label{fig:receptive_field}
\end{figure}

\begin{table}[t]
\centering
\caption{Ablation study on the receptive field of AR model and the parallel mechanism. Theoretical BPD and decompress throughput on CIFAR10 (red channel) are reported. Receptive field means the current point is predicted by how many previous points.}
\label{tab:w/o_parallel}
\begin{tabular}{ccr} 
\toprule
\multicolumn{1}{c}{Receptive Field} & BPD    & \multicolumn{1}{c}{\begin{tabular}[c]{@{}c@{}}Throughput\\(MB/s)\end{tabular}}  \\ 
\midrule
3 (with parallel)                  & 5.7714 & \textbf{382.5}                                                                  \\
3                & 5.7714 &       48.5                                                                 \\
4                                  & 5.7737 & 47.5                                                                            \\
5                                  & 5.7708 & 46.0                                                                            \\
6                                  & 5.7701 & 44.8                                                                            \\
7                                  & 5.7449 & 44.0                                                                            \\
\bottomrule
\end{tabular}
\end{table}

\begin{table}
\centering
\caption{Different settings of how to predict residual distribution. \ie, different inputs of VQ-VAE. We use original image, reconstruction image, residual, and original image concatenated with residual to predict the residual. }
\begin{tabular}{ll} 
\toprule
Input                         & BPD   \\ 
\midrule
Original Image                & \textbf{4.17}  \\
Reconstruction Image          & 4.23  \\
Residual                      & 4.19  \\
Origanl Image concat Residual & 4.17  \\
\bottomrule
\end{tabular}
\end{table}

\section{Model architecture}
We do further experiments to see how different codebook sizes and dimensions affect the results.
A large codebook size or codebook dimension favors the BPD but reduces performance, as seen in \cref{tab:resblock_and_codebook}. A tiny codebook size or codebook dimension does indeed produce substantially faster throughput, but it would contains relatively little information when applying our model to high resolution images. As a result, on codebook size/dimension, there is a trade-off between BPD and throughput. To balance the trade-off, we set the codebook size to 256 and the codebook dimension to 32 in our framework.

We investigate the impact of various model capacities. We modify the number of residual block and the mid channel dimension to change the model capacity. As shown in \cref{tab:resblock_and_codebook}, the larger model achieves lower BPD, but the throughput of the model drops even faster, which deviates from the 'practical' aim. As a result, we adjust our model to 4 residual blocks and 32 mid channel dimensions.

\section{Receptive field rules of AR model}
\Cref{fig:receptive_field} demonstrates the receptive field rules of the auto-regressive model on a single channel which we use in the ablation study. As the figure suggests, these receptive field settings cannot be parallelized using what we implement in Three-Way Auto-regressive. 

\section{Parallel vs No Parallel}
We add the experiment result of Three-Way Auto-regressive without parallel. As shown in \cref{tab:w/o_parallel}, Three-Way Auto-regressive using our designed parallel mechanism achieves the fastest throughput with competitive compression ratio.

\section{Why predict residual given original image?}
We investigate the different input of VQ-VAE model. We use the original image, reconstruction image, residual, and original image concatenated with residual to predict the residual. As illustrated in \cref{tab:vqvae_diff_inputs}, directly using residual to predict residual is not the best solution. As the original image has a lot of spatial information, we choose to model the residual given original image.
\end{document}